\pgfplotsset{compat=1.14}
\definecolor{xdxdff}{rgb}{0.49019607843137253,0.49019607843137253,1}
\definecolor{ududff}{rgb}{0.30196078431372547,0.30196078431372547,1}
\definecolor{ffvvqq}{rgb}{1,0.3333333333333333,0}
\definecolor{qqaydv}{rgb}{0,0.6588235294117647,0.8352941176470589}
\definecolor{ccqqqq}{rgb}{0.8,0,0}
\newcommand{\Z}{\mathbb{Z}}
\newcommand{\Q}{\mathbb{Q}}
\newcommand{\R}{\mathbb{R}}
\renewcommand{\deg}{\mathtt{deg}}
\newcommand{\coeff}{\mathtt{coeff}}
\newcommand{\var}{\mathtt{var}}
\newcommand{\lc}{\mathtt{lc}}
\newcommand{\lt}{\mathtt{lt}}
\newcommand{\res}{\mathtt{res}}
\newcommand{\disc}{\mathtt{disc}}
\newcommand{\poly}{\mathtt{poly}}
\newcommand{\Root}{\mathtt{Root}}
\newcommand{\order}{\mathtt{order}}
\newcommand{\level}{\mathtt{level}}
\newcommand{\sample}{\mathtt{sample}}
\newcommand{\feasible}{\mathtt{feasible}}
\newcommand{\Value}{\mathtt{value}}
\newcommand{\proj}{\mathtt{Proj}}
\newcommand{\samplepoly}{\mathtt{s\_poly}}
\newcommand{\sampleinterval}{\mathtt{s\_interval}}
\newcommand{\samplecoeff}{\mathtt{s\_coeff}}
\newcommand{\samplecell}{\mathtt{s\_cell}}
\newcommand{\conflictcore}{\mathtt{core}}
\newcommand{\explain}{\mathtt{explain}}
\newcommand{\resolve}{\mathtt{resolve}}
\newcommand{\factor}{\mathtt{factor}}
\title{Solving Satisfiability of Polynomial Formulas By Sample-Cell Projection  }
\author{Haokun Li  \and Bican Xia}
\institute{School of Mathematical Sciences, Peking University, Beijing, China\\
\email{ker@pku.edu.cn}, \email{xbc@math.pku.edu.cn}}
\begin{document}

\maketitle
\begin{abstract}
    A new algorithm for deciding the satisfiability of polynomial formulas over the reals is proposed. The key point of the algorithm is a new projection operator, called sample-cell projection operator, custom-made for Conflict-Driven Clause Learning (CDCL)-style search. Although the new operator is also a CAD (Cylindrical Algebraic Decomposition)-like projection operator which computes the cell (not necessarily cylindrical) containing a given sample such that each polynomial from the problem is sign-invariant on the cell, it is of singly exponential time complexity. The sample-cell projection operator can efficiently guide CDCL-style search away from conflicting states. Experiments show the effectiveness of the new algorithm.
    
    
\end{abstract}
\keywords{SMT \and satisfiability \and nonlinear arithmetic \and CAD  \and polynomial.}
\section{Introduction}
The research on SMT (Satisfiability Modulo Theories) \cite{deMoura+Dutertre+Shankar:cav2007,DBLP:journals/cacm/MouraB11,DBLP:series/faia/2009-185} in recent years 
brings us many popular solvers such as Z3 \cite{DBLP:conf/tacas/MouraB08}, CVC4 \cite{BCD+11}, Yices \cite{Dutertre:cav2014}, MathSAT5 \cite{mathsat5}, etc.
Nevertheless, in theory and practice, it is important to design efficient SMT algorithms and develop tools (or improve existing ones) for many other theories, {\it e.g.} string \cite{DBLP:conf/cav/LiangRTBD14}, linear arithmetic \cite{DBLP:conf/cav/DutertreM06,DBLP:conf/cade/JovanovicM12} and non-linear arithmetic \cite{DBLP:journals/corr/abs-1905-09227,DBLP:conf/smt/KorovinKS14} over the reals. A straightforward idea is to integrate  Conflict-Driven Clause Learning (CDCL)-style search with theory solvers \cite{DBLP:series/faia/2009-185}.
For example, integrating CDCL-style search with a theory solver for determining whether a basic semialgebraic set is empty can solve satisfiability in the theory of non-linear arithmetic over the reals. 

It is well-known that the problem whether a basic semialgebraic set is empty is decidable due to Tarski’s decision procedure \cite{10.1007/978-3-7091-9459-1_3}. Tarski's algorithm cannot be a theory solver in practice because of its very high complexity.
Cylindrical algebraic decomposition (CAD) algorithm \cite{DBLP:journals/cca/Collins76} is a widely used theory solver in practice though it is of doubly exponential time complexity.  
The idea of CAD algorithm is to decompose $\R^n$ into cells such that each polynomial from the problem is sign-invariant in every cell. A key concept in CAD algorithm is the projection operator. 
Although many improved projection operators have been proposed \cite{DBLP:conf/issac/Hong90,DBLP:journals/jsc/McCallum88,10.1007/978-3-7091-9459-1_12,brown_improved_2001,Han_Dai_Xia_2014,Dai_Han_Hong_Xia_2015,Xia_Yang_2016}, the CAD method is still of doubly exponential time complexity. 
The main reason is that in order to carry enough information, projection of variables causes the number of polynomials grows rapidly. 
So the cost of simply using CAD as a theory solver is unacceptable.


   
Jovanovic and de Moura \cite{DBLP:conf/cade/JovanovicM12}  eased the burden of using CAD as a theory solver by modifying the CDCL-style search framework. They changed the sequence of search states by adding variable assignments to the sequence. The benefit of this is that they can use real-root isolation, which is of polynomial time complexity, to check consistency of literals for there will be only one unassigned variable in the literals of the current state. When a conflict of literals is detected, they explain the conflict by applying CAD to a polynomial set called conflicting core to find the cell where the sample of assignments belongs. 
But even using CAD only when explaining conflicts is a huge computational cost, as CAD is of doubly exponential time complexity.  Furthermore, CAD will produce all cells in $\R^n$  other than the only one we need, making computation waste. 

In this paper, we propose a new custom-made CAD-like projection operator, called sample-cell projection operator. It only processes the cell containing a given sample, which is exactly what conflict explanation needs.
The idea of our operator is trying to project polynomials related to the target cell and ignore irrelevant polynomials.
We integrate our sample-cell projection operator with Jovanovic's improved CDCL-style search framework. The new operator can efficiently guide CDCL-style search away from conflicting states.
It is proved that the new algorithm is of singly exponential time complexity. 
We have implemented a prototype solver LiMbS which is base on Mathematica 12. 
Experiments show the effectiveness of the new algorithm.
   
The rest of this paper is structured as follows: Section \ref{sec:pre} introduces  the  background  knowledge and notation. 
Section \ref{sec:sample} defines sample-cell projection and presents the details of our approach. Section \ref{sec:cdcl} describes the CDCL-style search framework which we adopt.
We evaluate our approach on many well-known examples and analyze its performance in Section \ref{sec:exp}. The paper is concluded in Section \ref{sec:conclusion}.

\section{Notation}\label{sec:pre}
Let $\R$ denote the field of real numbers, $\Z$ denote the ring of integers and $\Q$ denote the field of rational numbers. 
Unless stated otherwise, we assume that all polynomials in this paper are in $\Z[\bar{x}]$, the ring of multivariate polynomials in variables $\bar{x}$ with integer coefficients.
 
For a polynomial $f\in \Z[\bar{y},x]$:
$$f(\bar{y},x)=a_mx^m+a_{m-1}x^{m-1}+\ldots+a_1x+a_0$$
where $a_m\neq0$ and $a_i\in \Z[\bar{y}]$ for $i=0,...,m$, the {\em degree} of
$f$ with respect to (w.r.t.) $x$ is $m$, denoted by $\deg(f,x)$. The {\em leading coefficient} of $f$ w.r.t. $x$ is $a_m$, denoted by $\lc(f,x)$ and
the {\em leading term} of $f$ w.r.t. $x$ is $a_mx^{m}$, denoted by $\lt(f,x)$.
Let 
\[\coeff(f, x)=\{a_i|0\leq i \leq m \land a_i\neq 0\}\] denote the {\em set of coefficients} of $f$ w.r.t. $x$ and $\var(f)=\{\bar{y},x\}$
denote the variables appearing in $f$. 

Suppose $g\in \Z[\bar{y},x]$:
$$g(\bar{y},x)=b_nx^{n}+b_{n-1}x^{n-1}+\ldots+b_1x+b_0$$
where $b_n\neq 0$ and $b_i\in \Z[\bar{y}]$ for $i=0,...,n$ .
Let $\res(f,g,x)$ denote the Sylvester {\em resultant} of $f$ and $g$ w.r.t. $x$, {\it i.e.} the determinant of the following matrix
\[ 
\left(\begin{array}{cccccccc} 
    a_m    & a_{m-1}& a_{m-2}& \ldots& a_0   & 0    & \ldots  &      0 \\ 
    0      & a_m    &a_{m-1} & \ldots& a_1   & a_0  & \ldots  &      0 \\ 
    \vdots & \vdots &\ddots  & \ddots&\ddots &\ddots& \ddots  & \vdots \\
    0      & 0      &\ldots  & a_m   &a_{m-1}&\ldots& \ldots  &    a_0 \\ 
    b_n    & b_{n-1}& b_{n-2}& \ldots& b_0   & 0    & \ldots  &      0 \\ 
    0      & b_n    &b_{n-1} & \ldots& b_1   & b_0  & \ldots  &      0 \\ 
    \vdots & \vdots &\ddots  & \ddots&\ddots &\ddots& \ddots  & \vdots \\
    0      & 0      &\ldots  & b_n   &b_{n-1}&\ldots& \ldots  &    b_0 \\ 
\end{array}\right)
\]
which has $n$ rows of $a_i$ and $m$ rows of $b_j$. The discriminant of $f$ w.r.t. $x$ is
$$\disc(f,x)=(-1)^\frac{m(m-1)}{2}\res(f,f',x).$$


An {\em atomic polynomial constraint} is $f\triangleright0$ where $f$ is a polynomial and $\triangleright\in \{\geq,>,=\}$. A {\em polynomial literal} (simply {\em literal}) is an atomic polynomial constraint or its negation. For a literal $l$, $\poly(l)$ denotes the polynomial in $l$ and $\var(l)=\var(\poly(l))$.  A {\em polynomial clause} is a disjunction $l_1\lor\cdots\lor l_s$ of literals. Sometimes, we write a clause as $\lnot(\bigwedge_i l_i)\lor \bigvee_j l_j$. A {\em polynomial formula} is a conjunction of clauses. 
An {\em extended polynomial constraint} $l$ is $x\triangleright \Root(f,k)$ where $\triangleright\in \{\geq,>,=\}$, $f\in\Z[\bar{y},u]$ with $x\not\in \var(f)$ and $k (0\leq k\leq \deg(f,u))$ is a given integer. Notice the variable $u$ is an exclusive free variable that cannot be used outside the $\Root$ object.

For a formula $\phi$, $\phi[a/x]$ denote the resulting formula via substituting $a$ for $x$ in $\phi$.
For variables $\bar{x}=(x_1,\ldots,x_r)$ and $\bar{a}=(a_1,\ldots,a_r)\in\R^r$, a mapping $\alpha$ which maps $x_i$ to $a_i$ for $i=1,...,r$ is called a {\em variable assignment} of $\bar{x}$ and $\bar{a}$ is called a {\em sample} of $\alpha$ or a {\em sample} of $\bar{x}$ in $\R^r$. 
We denote $\phi[a_1/x_1,\ldots,a_r/x_r]$ by $\alpha(\phi)$. If $\alpha(\phi)=0$, we say $\phi$ vanishes under $\alpha$  or vanishes under $\bar{a}$. 
Suppose an extended polynomial constraint $l$ is of the form $x\triangleright \Root(f,k)$ and $\alpha$ is a variable assignment of $(\bar{y},x)$. If $\beta_k$ is the $k$th real root of $\alpha(f)$, $\alpha(l)$ is defined to be $\alpha(x)\triangleright \beta_k$. If $\alpha(f)$ has less than $k$ real roots,  $\alpha(l)$ is defined to be {\tt False}. 

\section{Sample-Cell Projection}\label{sec:sample}
In this section, we first introduce some well-known concepts and results concerning CAD and then define the so-called sample-cell projection operator.

Let $f$ be an  analytic function  defined in some open set $U$ of $K^n$ where $K$ is a field. For a point $p\in U$, if $f$ or some partial derivative (pure and mixed) of $f$ of some order does not vanish at $p$, then we say that $f$ has {\em order} $r$ where $r$ is the least non-negative integer such that some partial derivative of total order $r$ does not vanish at $p$. Otherwise, we say $f$ has infinite order at $p$. The order of $f$ at $p$ is denoted by $\order_pf$.  We say $f$ is {\em order-invariant} in a subset $S\subset U$ if $\order_{p_1}f=\order_{p_2}f$ for any  $p_1, p_2\in S$.  Obviously, if $K=\R$ and the analytic function $f$ is order-invariant in $S$, then $f$ is sign-invariant in $S$.

An $r$-variable polynomial $f(\bar{x},x_r)$ where $\bar{x}=(x_1,\ldots,x_{r-1})$ is said to be {\em analytic delineable} on a connected  $s$-dimensional submanifold $S\subset \R^{r-1}$ if 
\begin{enumerate}
    \item The number $k$ of different real roots of $f(a,x_r)$ is invariant for any point $a\in S$. And the trace of the real roots  are the graphs of some pairwise disjoint analytic functions $\theta_1<\ldots<\theta_k$ from $S$ into $\R$ ({\it i.e.} the order of real roots of $f(a,x_r)$ is invariant for all point $a\in S$);
    \item There exist positive integers $m_1,\ldots,m_k$ such that for every point $a\in S$, the multiplicity of the real root $\theta_i(a)$ of $f(a,x_r)$ is $m_i$ for $i=1,...,k$.
\end{enumerate}
Especially, if $f$ has no zeros in $S\times \R$, then $f$ is delineable on $S$ with $k=0$. The analytic functions $\theta_i$'s are called the {\em real root functions} of $f$  on $S$, the graphs of the $\theta_i$'s are called the {\em $f$-sections} over $S$, and the connected regions between two consecutive $f$-sections (for convenience, let $\theta_0=-\infty$ and $\theta_{k+1}=+\infty$) are called {\em $f$-sectors} over $S$. Each $f$-section over $S$ is a connected  $s$-dimensional submanifold in $\R^r$ and each  $f$-sector over $S$ is a connected  $(s+1)$-dimensional submanifold in $\R^r$. 

\begin{theorem}[\cite{10.1007/978-3-7091-9459-1_12}, Theorem 2]\label{thm:mc}
Let $r\geq 2$ and $f(\bar{x},x_r)$ be a polynomial in $\R[\bar{x},x_r]$ of positive degree where $\bar{x}=(x_1,...,x_{r-1})$.  Let $S$ be a connect submanifold of $\R^{r-1}$ where $f$ is degree-invariant and does not vanish identically. Suppose that $\disc(f,x_r)$ is a nonzero polynomial and is order-invariant in $S$. Then $f$ is analytic delineable on $S$ and is order-invariant in each $f$-section over $S$.
\end{theorem}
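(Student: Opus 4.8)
The plan is to prove the two assertions separately and to use the order-invariance of $\disc(f,x_r)$ as the single device that controls how the roots of $f(\bar{a},x_r)$ move as $\bar{a}$ varies over $S$. As a first reduction, I would observe that since $S$ is connected and $\disc(f,x_r)$ is order-invariant on $S$, the discriminant is either nowhere zero on $S$ (common order $0$) or vanishes on all of $S$ to one fixed finite order (common order $\geq 1$); here the hypothesis that $\disc(f,x_r)$ is a \emph{nonzero} polynomial is crucial, as it guarantees that $f$ is squarefree in $x_r$ and so rules out the degenerate ``nullification'' behaviour that would otherwise break the argument.

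I would first dispose of the generic case in which $\disc(f,x_r)$ vanishes nowhere on $S$. There $f(\bar{a},x_r)$ has only simple roots for every $\bar{a}\in S$, so by the complex implicit function theorem the roots depend analytically on $\bar{a}$; degree-invariance keeps $\lc(f,x_r)$ nonzero on $S$, hence no root escapes to infinity, while a collision of two roots would force a multiple root and thus $\disc(f,x_r)=0$. This yields analytic delineability with all multiplicities equal to $1$, and order-invariance is then immediate, since $f$ has order $1$ on each section and order $0$ on each sector.

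For delineability in the general case I would track the $m=\deg(f,x_r)$ complex roots of $f(\bar{a},x_r)$. Degree-invariance keeps them bounded and continuously varying, so the number of distinct roots and the multiplicity pattern can change only when a cluster of roots merges or splits. The key is to show this cannot happen on $S$: any such change produces a jump in the order of $\disc(f,x_r)$ at the corresponding point, contradicting order-invariance of the discriminant on the connected set $S$. Once the distinct-root count $k$ and the multiplicities $m_1,\ldots,m_k$ are seen to be constant, the local analytic factors of $f$ furnish the analytic root functions $\theta_1<\cdots<\theta_k$, giving conditions (1) and (2) of delineability.

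The hardest step, and the genuinely new content, is order-invariance of $f$ on each section. Near a section $\theta_i$ of constant multiplicity $m_i$ I would use a Weierstrass-type analytic factorization $f=(x_r-\theta_i)^{m_i}\,u$ with $u$ nonvanishing along the section, which bounds $\order_p f$ in terms of $m_i$ and the behaviour of $u$. To pin the order down to a constant I would use the factored form $\disc(f,x_r)=\lc(f,x_r)^{2m-1}\prod_{i<j}(\alpha_i-\alpha_j)^2$ over the complex roots $\alpha_i$, with $\lc(f,x_r)$ a nonvanishing factor on $S$: the order of the discriminant at $\bar{a}\in S$ splits into contributions, one for each cluster of merging roots, so order-invariance of $\disc(f,x_r)$ forces each contribution, and with it the order of $f$ along each section, to be constant over $S$. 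I expect this last step to be the main obstacle, both because the decomposition of $\order_{\bar{a}}\disc(f,x_r)$ into per-section contributions must be made rigorous and because the local analytic factorization has to be carried out uniformly along the submanifold $S$; the sub-case where $\disc(f,x_r)$ vanishes identically on $S$ — so that every fibre already carries multiple roots and the discriminant informs us only through its fixed order of vanishing — is the most delicate point.
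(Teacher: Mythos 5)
First, note that the paper offers no proof of this statement: it is quoted directly from McCallum, so there is no internal argument to compare yours against, and I am judging your sketch on its own terms. Your treatment of the delineability half is in the right spirit (and close to McCallum's actual route): group the roots of $f(\bar{a},x_r)$ into clusters, factor the discriminant locally as $\lc(f,x_r)^{2m-2}$ times cluster-discriminants times pairwise cluster-resultants, observe that these factors are locally analytic with the leading coefficient and the cross-resultants non-vanishing, and combine upper semi-continuity of order with order-invariance of the total to force each cluster-discriminant to be order-invariant, whence no cluster can split (a split would drop that factor's order to $0$). You assert this chain rather than carry it out, but it is fillable; together with the conjugation argument keeping real clusters real, it yields delineability.

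The genuine gap is in the order-invariance of $f$ on each section, and it sits exactly where you feared. The factorization $f=(x_r-\theta_i)^{m_i}\,u$ with $u$ a unit is false in general: Weierstrass preparation gives $f=W\cdot u$ with $W$ a monic degree-$m_i$ polynomial in $x_r$ whose roots over points \emph{off} $S$ need not coalesce, and $\theta_i$ need not extend analytically off $S$. Already $f=x_r^2-x_1$ near the origin (with $S=\{x_1=0\}$) has a single section of multiplicity $m_1=2$, yet $\order_{(0,0)}f=1$ and no factorization of your form exists; in general $\order_p f=\order_p W$ can be anything between $1$ and $m_i$ and is not determined by the multiplicity pattern. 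Consequently your per-cluster bookkeeping only delivers order-invariance of the local discriminants $\disc(W,x_r)$ along $S$; the remaining implication --- that this forces $\order_p W$ to be constant along the section --- is precisely the equimultiplicity theorem of Zariski that McCallum's proof invokes, a substantial result in local analytic geometry that does not follow from the product formula for the discriminant. Without importing (or reproving) that theorem, the second conclusion of the statement is not established.
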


Suppose $a=(\bar{a},a_n)=(a_1,\ldots,a_n)$ is a sample  of $(\bar{x},x_n)$ in $\R^n$ and $F=\{f_1(\bar{x},x_n)$ $,\ldots , f_r(\bar{x},x_n)\}$ is a polynomial set in $\Z[\bar{x},x_n]$ where $\bar{x}=(x_1,\ldots,x_{n-1})$. Consider the real roots of polynomials in $\{f_1(\bar{a},x_n),\ldots,f_r(\bar{a},x_n)\}\setminus \{0\}$. Denote the $k$th real root of $f_i(\bar{a},x_n)$ by $\theta_{i,k}$.  
We define two concepts: the {\em sample polynomials set} of $a$ in $F$ (denoted by $\samplepoly(F,x_n,a)$) and the {\em sample interval} of $a$ in $F$ (denoted by $\sampleinterval(F,x_n,a)$) as follows.

If there exists $\theta_{i,k}$ such that $\theta_{i,k}=a_n$ then
\begin{align*}
    \samplepoly(F,x_n,a)&=\{f_i\},\\
    \sampleinterval(F,x_n,a)&=(x_n=\Root(f_i(\bar{x},u),k));
\end{align*}

If there exist two consecutive real roots $\theta_{i_1,k_1}$ and $\theta_{i_2,k_2}$ such that $\theta_{i_1,k_1}<a_n<\theta_{i_2,k_2}$ 
then
\begin{align*}
    \samplepoly(F,x_n,a)&=\{f_{i_1},f_{i_2}\},\\
    \sampleinterval(F,x_n,a)&=\Root(f_{i_1}(\bar{x},u),k_1)<x_n<\Root(f_{i_2}(\bar{x},u),k_2);
\end{align*}

If there exists $\theta_{i',k'}$ such that $a_n>\theta_{i',k'}$ and for all $\theta_{i,k}$   $\theta_{i',k'}\geq\theta_{i,k}$ then
\begin{align*}
    \samplepoly(F,x_n,a)&=\{f_{i'}\},\\
    \sampleinterval(F,x_n,a)&=x_n>\Root(f_{i'}(\bar{x},u),k');
\end{align*}

If there exists $\theta_{i',k'}$ such that $a_n<\theta_{i',k'}$ and for all $\theta_{i,k}$  $\theta_{i',k'}\leq\theta_{i,k}$then
\begin{align*}
    \samplepoly(F,x_n,a)&=\{f_{i'}\},\\
    \sampleinterval(F,x_n,a)&=x_n<\Root(f_{i'}(\bar{x},u),k').
\end{align*}
Specially, if every polynomial in $\{f_1(\bar{a},x_n),\ldots,f_r(\bar{a},x_n)\}\setminus \{0\}$ does not have any real roots, define 
\begin{align*}
    \samplepoly(F,x_n,a)&=\emptyset,\\
    \sampleinterval(F,x_n,a)&=\text{{\tt True}}.
\end{align*}
\begin{example} \label{ex:s}
Let $F=\{f_1,f_2,f_3\}$ where $f_1=y+0.5x-10$, $f_2=y+0.01(x-9)^2-7$, $f_3=y-0.03x^2-1$ and $A=(4,9),B=(4,6.75),C=(4,4),D=(4,1)$. We have (see Figure \ref{fig:1})
\[
\begin{array}{ll}
  \samplepoly(F,y,A)=\{f_1\}, &\sampleinterval(F,y,A)=y>\Root(f_1(x,u),1),\\
  \samplepoly(F,y,B)=\{f_2\},  &\sampleinterval(F,y,B)=y=\Root(f_2(x,u),1),\\
  \samplepoly(F,y,C)=\{f_2,f_3\},&\sampleinterval(F,y,C)=\begin{array}{rl}
      & y>\Root(f_3(x,u),1) \\
    \wedge &  y<\Root(f_2(x,u),1)
  \end{array},\\
  \samplepoly(F,y,D)=\{f_3\},&\sampleinterval(F,y,D)=y<\Root(f_3(x,u),1).\\
\end{array}
\]

\begin{figure}[H]
    \centering
\begin{tikzpicture}[scale = 0.5,line cap=round,line join=round,>=triangle 45,x=1cm,y=1cm]
\begin{axis}[
x=1cm,y=1cm,
axis lines=middle,
ymajorgrids=true,
xmajorgrids=true,
xmin=-0.5865830202854977,
xmax=12.486294515401967,
ymin=-0.6301322314049569,
ymax=10.113669421487607,
xtick={0,1,...,12},
ytick={0,1,...,10},]
\clip(-0.5865830202854977,-0.6301322314049569) rectangle (12.486294515401967,10.113669421487607);
\draw[line width=2pt,color=ccqqqq,smooth,samples=100,domain=-0.5865830202854977:12.486294515401967] plot(\x,{0-0.01*((\x)-9)^(2)+7});
\draw[line width=2pt,color=qqaydv,smooth,samples=100,domain=-0.5865830202854977:12.486294515401967] plot(\x,{0-0.5*(\x)+10});
\draw[line width=2pt,color=ffvvqq,smooth,samples=100,domain=-0.5865830202854977:12.486294515401967] plot(\x,{0.03*(\x)^(2)+1});
\begin{scriptsize}
\draw (5,-0.4) node[scale=3] {$x$} ;
\draw (-0.4,5) node[scale=3] {$y$} ;
\draw[color=ccqqqq] (1.4062674680691189,5.97677986476334) node[scale=3] {$f_2$} ;
\draw[color=qqaydv] (1.0745740045078913,9.093459053343354) node[scale=3] {$f_1$};
\draw[color=ffvvqq] (1.0062674680691189,1.5978918106686724) node[scale=3] {$f_3$};
\draw [fill=ududff] (4,9) circle (2.5pt);
\draw[color=ududff] (4.161726521412478,9.45251239669422) node[scale=2] {$A$};
\draw [fill=xdxdff] (4.002201149761497,6.7502200665255465) circle (2.5pt);
\draw[color=xdxdff] (4.161726521412478,7.198567993989485) node[scale=2] {$B$};
\draw [fill=ududff] (4,4) circle (2.5pt);
\draw[color=ududff] (4.161726521412478,4.44875582268971) node[scale=2] {$C$};
\draw [fill=ududff] (4,1) circle (2.5pt);
\draw[color=ududff] (4.161726521412478,1.4434966190833982) node[scale=2] {$D$};
\end{scriptsize}
\end{axis}
\end{tikzpicture}
    \caption{Example \ref{ex:s}}
    \label{fig:1}
\end{figure}
\end{example}
Additionally, for a polynomial
 $$h=c_mx_n^{d_m}+c_{m-1}x_n^{d_{m-1}}+\ldots+c_{0}x_n^{d_0}$$
where $d_m>d_{m-1}>\cdots>d_0$, $c_i\in \R[\bar{x}]$ and $c_i\neq0$ for $i=0,...,m$. If there exists $j\ge 0$ such that $c_j(\bar{a})\neq 0$ and $c_i(\bar{a})=0$ for any $i>j$, then the {\em sample coefficients} of $h$ at $(\bar{a},a_n)$ is defined to be $\{c_m,c_{m-1},\ldots,c_j\}$, denoted by  $\samplecoeff(h,x_n,(\bar{a},a_n))$. Otherwise  $\samplecoeff(h,x_n,(\bar{a},a_n))=\{c_m,\ldots,c_0\}$.
\begin{definition}\label{def:projsc}
Suppose $\bar{a}$ is a sample of $\bar{x}$ in $\R^n$ and $F=\{f_1,\ldots , f_r\}$ is a polynomial set in $\Z[\bar{x}]$  where $\bar{x}=(x_1,\ldots,x_n)$. The {\em sample-cell projection} of $F$ on $x_n$ at $\bar{a}$ is 
\[
\begin{split}
    \proj_{sc}(F,x_n,\bar{a})= \bigcup_{f\in F}&\samplecoeff(f,x_n,\bar{a})\cup\\
                        \bigcup_{f\in F}&\{\disc(f,x_n)\}\cup\\
                        \bigcup_{\begin{subarray}{c}f\in F,g\in\\ \samplepoly(F,x_n,\bar{a}),\\f\neq g\end{subarray}}&\{\res(f,g,x_n)\}
\end{split}
\]
\end{definition}
\begin{remark}\quad
\begin{itemize}
  \item If $f\in F$ and $x_n\not\in \var(f)$, $f$ is obviously an element of $\proj_{sc}(F, x_n, \bar{a})$. 
  \item  Computing $\proj_{sc}(F,x_n,\bar{a})$ will produce  $O(rn+3r)$ elements, 
  so the time complexity of projecting all the variables by recursively using $\proj_{sc}$ is $O((n+3)^nr)$.
\end{itemize}
\end{remark}

Now we prove the property of the new projection operator. A set of polynomials in $\Z[\bar{x}]$ is said to be a {\em squarefree basis} if the elements of the set have positive degrees, and are primitive, squarefree and pairwise relatively prime. For a connected submanifold $S$ of $\R^{n-1}$,  we denote by $S\times \sampleinterval(F,x_n,\bar{a})$
$$
\left\{(\alpha_1,\ldots,\alpha_n)\in \R^n~|~~\begin{array}{l}
     (\alpha_1,\ldots,\alpha_{n-1})\in S  \\
     \wedge~ \sampleinterval(F,x_n,\bar{a})[\alpha_1/x_1,\ldots,\alpha_n/x_n]
\end{array} \right\}.
$$
\begin{theorem}\label{th:sc}
Let $F$ be a finite squarefree basis in $\Z[\bar{x}]$ where $\bar{x}=(x_1,\ldots,x_n)$ and $n\geq 2$. Let $\bar{a}=(a_1,\ldots,a_n)$ be a sample of $\bar{x}$ in $\R^n$   and $S$ be a connected submanifold of $\R^{n-1}$ such that $(a_1,\ldots,a_{n-1})\in S$. Suppose that
each element of $\proj_{sc}(F,x_n,\bar{a})$  is order-invariant in $S$. Then each element in $F$ either
vanishes identically on $S$ or  is analytic delineable on $S$, each section over $S$ of the element of $F$ which do not vanish identically on $S$ is either equal to or disjoint with $S\times \sampleinterval(F,x_n,\bar{a})$, and each element of $F$  either vanishes identically on $S$ or is order-invariant in $S\times \sampleinterval(F,x_n,\bar{a})$.
\end{theorem}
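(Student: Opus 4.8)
The plan is to deduce the whole statement from a pointwise use of Theorem~\ref{thm:mc}, exploiting the three blocks of $\proj_{sc}(F,x_n,\bar a)$ for three separate jobs: the sample coefficients control the $x_n$-degree of each $f$ on $S$ (and settle the ``vanishes identically'' alternative), the discriminants $\disc(f,x_n)$ feed Theorem~\ref{thm:mc}, and the resultants against the sample polynomials fix the position of the sections relative to the cell $S\times\sampleinterval(F,x_n,\bar a)$.

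First I fix $f\in F$ and read its leading behaviour off $\samplecoeff(f,x_n,\bar a)$. Each coefficient in that set is order-invariant, hence sign-invariant, on the connected set $S$, and $(a_1,\dots,a_{n-1})\in S$; therefore a coefficient vanishing at the sample vanishes on all of $S$, and one nonzero at the sample is nowhere zero on $S$. This yields the dichotomy: in the ``otherwise'' branch of $\samplecoeff$ every coefficient vanishes at $\bar a$, so $f$ vanishes identically on $S$; in the main branch the distinguished $c_j$ is nowhere zero on $S$ while the higher coefficients vanish on $S$, so $f$ is degree-invariant on $S$ with $x_n$-degree $d_j$. When $c_m$ itself is the distinguished coefficient (no degree fall), I apply Theorem~\ref{thm:mc} to $f$ directly: its discriminant is order-invariant by hypothesis, and it is a nonzero polynomial because $F$ is a squarefree basis (a primitive squarefree $f$ of positive degree in $x_n$ stays squarefree over $\Q(x_1,\dots,x_{n-1})$ by Gauss's lemma, so $\disc(f,x_n)\neq 0$). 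This delivers analytic delineability of $f$ on $S$ and order-invariance of $f$ in each of its sections.

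Next I place the sections relative to the cell using resultants. For $f\in F$ and a sample polynomial $g\in\samplepoly(F,x_n,\bar a)$ with $f\neq g$, the polynomial $\res(f,g,x_n)$ lies in $\proj_{sc}$ and is sign-invariant on $S$, hence either nowhere zero or identically zero on $S$. In the first case $f$ and $g$ share no root over any point of $S$, so no $f$-section meets a $g$-section; in the second, a branch of the roots of $f$ coincides with a branch of the roots of $g$ throughout $S$ (using analyticity of the root functions and connectedness of $S$), i.e.\ an $f$-section equals a $g$-section. Since the boundaries of $\sampleinterval(F,x_n,\bar a)$ are sections of sample polynomials, and at $\bar a$ these boundaries are consecutive roots with no root of any element of $F$ strictly between them, the constant relative order of the analytic sections along $S$ forces each $f$-section to be either equal to the cell (only possible when the interval is itself a section) or disjoint from it. Combining the two analyses gives order-invariance on $S\times\sampleinterval$: on a sector no $f$-section enters the cell, so $f$ is nowhere zero and has order $0$ there; on a section the cell either coincides with an $f$-section, where Theorem~\ref{thm:mc} already gives order-invariance, or is disjoint from all of them, where $f$ again has order $0$. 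Note that only resultants against the sample polynomials enter, which is exactly the economy the operator is designed for.

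The step I expect to be the real obstacle is the degree-fall case $j<m$, where $c_m$ vanishes on $S$ so that Theorem~\ref{thm:mc} cannot be applied to $f$ with its nominal degree. The tempting fix of passing to the truncation $\bar f=c_jx_n^{d_j}+\dots+c_0$, which agrees with $f$ on $S\times\R$, is not enough: $\bar f$ can fail to be squarefree and its discriminant can vanish on $S$ even though $f$ stays order-invariant there (for instance $f=x_1x_n^3+(x_n-1)^2$ on $S=\{x_1=0\}$, whose truncation is a perfect square yet whose order on the section $x_n=1$ is $1$ because $\partial f/\partial x_1\neq 0$). Handling this honestly requires a reductum/leading-coefficient argument: using that $c_m,\dots,c_{j+1}$ vanish identically on $S$ while $c_j$ does not, one reduces delineability and order-invariance of $f$ to those of its reductum, with the base cases (constant root structure, or repeated factors whose multiplicities are stable on $S$) verified directly rather than through a vanishing discriminant. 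Carrying this reduction through, together with the immediate low-degree cases $d_j\le 1$, is the technical heart of the proof; the remaining pieces are the routine bookkeeping sketched above.
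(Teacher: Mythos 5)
Your overall architecture differs from the paper's, and the difference is where the gap lies. The paper does not apply Theorem~\ref{thm:mc} to each $f$ separately and then patch the sections together with resultants; it applies Theorem~\ref{thm:mc} once to the product $f'=f\cdot\prod_{g\in \samplepoly(F,x_n,\bar a)} g$. Since $\disc(f',x_n)$ factors into $\disc(f,x_n)$, the $\disc(g,x_n)$'s, the $\res(f,g,x_n)$'s and the $\res(g_1,g_2,x_n)$'s --- every one of which divides an element of $\proj_{sc}(F,x_n,\bar a)$ --- the hypothesis gives order-invariance of $\disc(f',x_n)$ on $S$, hence delineability of $f'$ and order-invariance of $f'$ (so of each factor) in every $f'$-section. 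Because every section of $f$ and of each sample polynomial is then an $f'$-section, the sections are globally equal or disjoint, and the cell (a section or a sector of the sample polynomials) is equal to or disjoint from each $f$-section. Your substitute for this step is the real problem: you use only \emph{sign}-invariance of $\res(f,g,x_n)$ on $S$, and in the branch where the resultant vanishes identically on $S$ you assert that an $f$-section coincides with a $g$-section throughout $S$ ``using analyticity of the root functions and connectedness of $S$.'' That inference fails. Vanishing of the resultant at a point of $S$ only yields a common \emph{complex} root there; even when the common root is real, the pair of coinciding branches may change from point to point, and two distinct real-analytic root functions over a connected $S$ of dimension $\ge 2$ can agree on a proper subvariety without being identical. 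So under your argument an $f$-section and a $g$-section may be neither equal nor disjoint, and the ``constant relative order of the analytic sections along $S$'' on which your disjointness-from-the-cell conclusion rests is exactly what remains unproved. The hypothesis actually available is \emph{order}-invariance of the resultant, and the product construction is the mechanism that converts it into the needed global statement.

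Your final paragraph, which you call the technical heart, is a detour. Theorem~\ref{thm:mc} as stated requires only that $f$ be degree-invariant on $S$ (the common degree of $f(a,x_n)$ for $a\in S$ is allowed to be smaller than the nominal degree) together with order-invariance of the nominal discriminant; the sign-invariance of $\samplecoeff(f,x_n,\bar a)$ on $S$ already delivers degree-invariance in the degree-fall case, so no truncation or reductum argument is needed, and your counterexample concerns a truncated polynomial one never has to form. In short: drop the reductum machinery, and replace the sign-invariant-resultant case analysis by applying Theorem~\ref{thm:mc} to the product of $f$ with the sample polynomials, as the paper does.
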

\begin{proof}
For any $f\in F$, if $f$ vanishes identically on $S$, there is nothing to prove. So  we may assume that any element in $F$ does not vanish identically on $S$.

For any $f\in F$ such that $f\not\in\samplepoly(F,x_n,\bar{a})$, 
let $f'=f\cdot~\prod_{g\in \samplepoly(F,x_n,\bar{a})} g$. Notice that $f'$ is degree-invariant on $S$ (each element of $\samplecoeff(f,x_n,\bar{a})$ is order-invariant, hence sign-invariant in $S$). And we have
\[
\begin{split}
\disc(f',x_n)=\disc(f,x_n)\cdot\prod_{g\in \samplepoly(F,x_n,\bar{a})} \disc(g,x_n)\cdot\\
\prod_{g\in \samplepoly(F,x_n,\bar{a})}\res(f,g,x_n)\cdot\\
\prod_{\begin{subarray}{c}g_1\in \samplepoly(F,x_n,\bar{a}),\\ g_2\in \samplepoly(F,x_n,\bar{a}),\\g_1\neq g_2\end{subarray}}\res(g_1,g_2,x_n).
\end{split}
\]
It follows from this equality that $\disc(f',x_n)\neq 0$ (because $f_i$'s are squarefree and pairwise relatively prime).
Obviously, each factor of $\disc(f',x_n)$ is a factor of $\proj_{sc}(F,x_n,\bar{a})$,
so $\disc(f',x_n)$ is order-invariant in $S$. 
By Theorem \ref{thm:mc}, $f'$ is analytic delineable on $S$ and is order-invariant in each $f'$-section over $S$. 
So $f$ and $g\in \samplepoly(F,x_n,\bar{a})$ 
are  order-invariant in each $f'$-section over $S$.  It follows that the sections over $S$ of $f$ and $g$ are pairwise disjoint.
Therefore, $f$ and $g\in \samplepoly(F,x_n,\bar{a})$  are analytic delineable on $S$, every section of them is either equal to or disjoint with $S\times \sampleinterval(F,x_n,\bar{a})$, and $f$ and $g$ are order-invariant in $S\times \sampleinterval(F,x_n,\bar{a})$.
\hfill $\blacksquare$ 
\end{proof}
\begin{remark}\label{rm:sc}
Notice that when $f$ vanishes identically on $S$, $f$ isn't always  order-invariant in $S\times \sampleinterval(F,x_n,\bar{a})$.  
This is avoidable by changing the ordering of variables and is negligible  when the satisfiability set of formulas is full-dimensional. We find a way to handle this rare case: either to determine whether the coefficients of $f$ have finitely many common zeros,  
or to enlarge $F$ by adding partial derivatives of $f$ whose order is less than $\order(f)$ and  one non-zero partial derivative whose order is exactly equal to $\order(f)$.
\end{remark}

When integrating the new projection operator with the CDCL-type search (see Section \ref{sec:cdcl}), we need a traditional CAD projection operator \cite{DBLP:journals/jsc/McCallum88,10.1007/978-3-7091-9459-1_12}.
\begin{definition}[\cite{DBLP:journals/jsc/McCallum88}]
Suppose $F=\{f_1,\ldots , f_r\}$ is a polynomial set in $\Z[\bar{x}]$  where $\bar{x}=(x_1,\ldots,x_n)$. The McCallum projection of $F$ on $x_n$ is
$$\proj_{mc}(F)=\bigcup_{f\in F}\{\coeff(f),\disc(f,x_n)\}\cup\bigcup_{\begin{subarray}{c}f\in F,g\in F,\\f\neq g\end{subarray}} \res(f,g,x_n)$$
\end{definition}
\begin{remark}
Notice that $\coeff$ can be replaced by $\samplecoeff$ when we have a sample of $n-1$ dimension.
\end{remark}

\begin{theorem} [\cite{10.1007/978-3-7091-9459-1_12}, Theorem 1]\label{th:mc}
Let $F$ be a finite squarefree basis in $\Z[\bar{x}]$ where $\bar{x}=(x_1,\ldots,x_n)$ and $n\geq 2$ and $S$ be a connected submanifold of     $\R^{n-1}$ such that
each element of $\proj_{mc}(F,x_n)$  is order-invariant in $S$. Then each element in $F$ either
vanishes identically on $S$ or  is analytic delineable on $S$, the sections over $S$ of the elements of $F$ which do not vanish identically on $S$ are pairwise disjoint, and each element of $F$  which does  not vanish identically on $S$ is order-invariant in every such section.
\end{theorem}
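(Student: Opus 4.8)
The plan is to mirror the proof of Theorem \ref{th:sc} just given, reducing the multi-polynomial statement to the single-polynomial case handled by Theorem \ref{thm:mc}; the only difference is that McCallum's projection retains \emph{all} pairwise resultants, whereas $\proj_{sc}$ retains only those involving $\samplepoly$. First I would dispose of the elements that vanish identically on $S$: they satisfy the first alternative of the conclusion, so I may assume that no $f_i\in F$ vanishes identically on $S$. I then form the product $f=\prod_{f_i\in F}f_i$ and verify that it meets the hypotheses of Theorem \ref{thm:mc} on $S$.

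The key algebraic input is the factorization of the discriminant of a product,
$$\disc(f,x_n)=\pm\prod_{f_i\in F}\disc(f_i,x_n)\cdot\prod_{\substack{f_i,f_j\in F\\ i<j}}\res(f_i,f_j,x_n)^2.$$
Every factor on the right-hand side is an element of $\proj_{mc}(F,x_n)$, hence order-invariant in $S$ by hypothesis, so $\disc(f,x_n)$ is order-invariant in $S$ as well. Because $F$ is a squarefree basis — each $f_i$ is squarefree and the $f_i$ are pairwise relatively prime — none of the $\disc(f_i,x_n)$ nor $\res(f_i,f_j,x_n)$ is the zero polynomial, so $\disc(f,x_n)\neq 0$. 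Degree-invariance of $f$ on $S$ is the standard McCallum point: order- (hence sign-) invariance of $\coeff(f_i)\subset\proj_{mc}(F,x_n)$ forces the true leading coefficient of each factor to be either identically zero or nowhere zero on $S$, so the degree is constant across $S$; and $f$ does not vanish identically on $S$ by our reduction. Theorem \ref{thm:mc} then applies to $f$, giving that $f$ is analytic delineable on $S$ and order-invariant in each $f$-section over $S$.

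It remains to disaggregate this to the individual $f_i$. For each $a\in S$ the real roots of $f(a,x_n)$ are exactly the union of the real roots of the $f_i(a,x_n)$, and these are delineated by the disjoint analytic root functions of $f$; moreover distinct $f_i,f_j$ cannot share a root over $S$ because $\res(f_i,f_j,x_n)$ is nonvanishing there. Hence each root function of $f$ belongs to a unique $f_i$, so each $f_i$ is analytic delineable on $S$, its sections form a subfamily of the $f$-sections, and the sections of distinct elements of $F$ are pairwise disjoint. Finally, on a fixed $f_i$-section the remaining factors $f_j$ ($j\neq i$) are nonvanishing (again by disjointness and the nonvanishing resultants), so by additivity of the order under multiplication, $\order_p f=\order_p f_i$ at every point $p$ of that section; order-invariance of $f$ therefore descends to order-invariance of $f_i$ in each of its sections.

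The main obstacle is this last descent step: order-invariance of a product does not in general pass to a factor, and the argument genuinely uses that, along each $f_i$-section, the complementary factors are order-zero — a fact resting on pairwise disjointness of sections, which in turn rests on the nonvanishing of the pairwise resultants supplied by $\proj_{mc}$. Assembling these dependencies in the correct order (delineability of $f$ $\Rightarrow$ partition of root functions among the $f_i$ $\Rightarrow$ disjointness of sections $\Rightarrow$ order-descent to each factor) is where the care lies; the discriminant factorization and the squarefree-basis bookkeeping are routine by comparison.
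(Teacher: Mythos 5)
You should first note that the paper itself contains no proof of this statement: Theorem~\ref{th:mc} is imported verbatim from McCallum's paper, so there is no in-paper argument to compare against. The closest analogue is the paper's proof of Theorem~\ref{th:sc}, and your skeleton --- reduce to the non-identically-vanishing elements, multiply them together, factor the discriminant of the product into discriminants and squared resultants all lying in $\proj_{mc}(F,x_n)$, invoke order-additivity to get order-invariance of that discriminant, check degree-invariance via the coefficients, apply Theorem~\ref{thm:mc} to the product, and then disaggregate --- is exactly the strategy used there (and, up to whether one takes the full product or pairwise products, the strategy of McCallum's original proof). The discriminant factorization, the non-vanishing of $\disc(f,x_n)$ from the squarefree-basis hypothesis, the partition of the product's root functions among the $f_i$ by a connectedness argument, and the order-descent via $\order_p f=\sum_i\order_p f_i$ are all correctly identified and correctly ordered.

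The one genuine gap is the assertion that $\res(f_i,f_j,x_n)$ is \emph{nonvanishing} on $S$. The hypothesis gives only order-invariance, hence sign-invariance, of the resultant on $S$: it is therefore either nowhere zero on $S$ or identically zero on $S$, and the second case is not excluded, because $S$ may be a submanifold of positive codimension on which a nonzero polynomial can vanish identically (relative primality of $f_i$ and $f_j$ only guarantees the resultant is not the zero polynomial). In that degenerate case three of your steps break simultaneously: distinct $f_i,f_j$ may share a root function over all of $S$, so a root function of the product need not belong to a \emph{unique} $f_i$; the sections of distinct elements may coincide rather than be disjoint; and on such a shared section the complementary factors are not of order zero, so $\order_p f=\order_p f_i$ fails and order-invariance does not descend. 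What your connectedness argument does establish unconditionally is the weaker dichotomy that two sections of distinct elements are either identical or disjoint (each root function of the product, being a continuous selection from the product's pairwise disjoint root functions over connected $S$, is a root function of a \emph{fixed} subset of the $f_i$) --- which is precisely the ``either equal to or disjoint with'' phrasing the authors adopt in Theorem~\ref{th:sc}, and is of a piece with the caveat in Remark~\ref{rm:sc}. To close the gap you must either add the hypothesis that the pairwise resultants do not vanish identically on $S$, or restate the disjointness and order-invariance conclusions in the equal-or-disjoint form.
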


Now, let us use the following definition to describe the procedure of calculating sample cells. We denote by $\factor(A)$ the set of irreducible factors of all polynomials in $A$.
\begin{definition}
Suppose $a=(a_1,\ldots,a_{n-1})$ is a sample of $(x_1,\ldots,x_{n-1})$ in $\R^{n-1}$ and $F=\{f_1,\ldots , f_r\}$ is a polynomial set in $\Z[\bar{x}]$ where $\bar{x}=(x_1,\ldots,x_{n})$. The {\em sample cell} of $F$ at $a$ is
	$$\samplecell(F,a)=\sampleinterval(F_1,\alpha_1)\land\cdots\land\sampleinterval(F_{n-1},\alpha_{n-1}) $$
    where $\alpha_{n-1}=a$, $F_{n-1}=\factor(\proj_{mc}(\factor(F)))$, $\alpha_i=(a_1,\ldots,a_i)$, and  $F_i=\factor(\proj_{sc}(F_{i+1},x_{i+1},\alpha_{i+1}))$ for $i=1,\ldots,{n-2}$.
\end{definition}
\begin{remark} \quad
\begin{itemize}
\item It is a standard way to use $\factor$ to ensure that every $F_i$ is a finite squarefree basis. 
\item Notice that the complexity of computing sample cell  $\samplecell$ depends on $\sum_{i=1}^{n-1}|F_i|$ where $|F_i|$ means the number of polynomials in $F_i$. 
From the recursive relationship $|F_{n-1}|=O(r^2+rn)$, $|F_i|<(3+i+1)|F_{i+1}|,i=1,\ldots,n-2$, it is not hard to know that the complexity of computing $\samplecell$ is $O((r^2+rn)(2+n)^{n-1})$.
\end{itemize}
\end{remark}

\begin{corollary}
Let $F=\{f_1(\bar{x},x_n),\ldots,f_r(\bar{x},x_n)\}$ be a polynomial set and $a\in \R^{n-1}$, where $\bar{x}=(x_1,\ldots,x_{n-1})$. If $$\forall b\in \R \;\bigvee_{i=1}^{r} f_i(a,b)\rhd_i 0,$$ where $\rhd_i\in \{>,\geq,=\}$,
then $$\forall \alpha \in \samplecell(\{f_1,\ldots,f_r\},a)\forall b \in \R \; \bigvee_{i=1}^{r} f_i(\alpha,b)\rhd_i 0.$$ 
\end{corollary}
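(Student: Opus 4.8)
\noindent\emph{Proof plan.} The plan is to reduce the statement to a combinatorial invariance over the base cell $S:=\samplecell(\{f_1,\ldots,f_r\},a)\subset\R^{n-1}$. The key claim I would isolate is that over $S$ the graphs of the real roots of the irreducible factors of $f_1,\ldots,f_r$ form a cylindrical stack whose number of sections, their ordering, and the factor owning each section do not change with $\alpha\in S$, and that on each section and each sector of this stack every $f_i$ is sign-invariant, the sign depending only on the index of the region and not on $\alpha$. Granting this, $\mathrm{sign}(f_i(\alpha,b))$ depends only on which region of the stack over $\alpha$ contains $b$, hence so does the truth value of $\bigvee_{i=1}^r f_i(\alpha,b)\rhd_i 0$. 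Since $a\in S$, the hypothesis says this disjunction is true on every region of the stack over $a$, i.e.\ for every index; because the index set is the same over all of $S$, the disjunction is true on every region over every $\alpha\in S$, which is exactly the conclusion. The proof therefore splits into (i) producing a connected $S$ on which the relevant projection polynomials are order-invariant, and (ii) deducing the uniform sign behaviour of the $f_i$.

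For (i) I would prove by induction on $i=1,\ldots,n-1$ that the partial cell $C_i\subset\R^i$ cut out by $\sampleinterval(F_1,x_1,\alpha_1)\wedge\cdots\wedge\sampleinterval(F_i,x_i,\alpha_i)$ is a connected submanifold containing $(a_1,\ldots,a_i)$ on which every element of $F_i$ is order-invariant. The base case $i=1$ is immediate: $C_1$ is either the point $\{a_1\}$ or an open interval lying strictly between two consecutive real roots of the univariate polynomials of $F_1$. For the step, order-invariance of $F_i$ on $C_i$ makes every element of $\proj_{sc}(F_{i+1},x_{i+1},\alpha_{i+1})$ order-invariant on $C_i$, since order is additive over products and each such polynomial factors into elements of $F_i$; Theorem \ref{th:sc} then shows that every element of $F_{i+1}$ is analytic delineable on $C_i$ with pairwise non-crossing sections, and that the single sector or section selected by $\sampleinterval(F_{i+1},x_{i+1},\alpha_{i+1})$ --- the one through $(a_1,\ldots,a_{i+1})$ --- is connected and carries order-invariance of $F_{i+1}$; this is $C_{i+1}$. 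Running the induction to $i=n-1$ yields a connected $S=C_{n-1}$ on which $F_{n-1}=\factor(\proj_{mc}(\factor(F)))$ is order-invariant.

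For (ii) I would now eliminate the top variable $x_n$. Since $F_{n-1}$ is order-invariant on $S$, so is every element of $\proj_{mc}(\factor(F))$, and Theorem \ref{th:mc} applies to the squarefree basis $\factor(F)$ over the connected set $S$: each irreducible factor is delineable on $S$, sections of distinct factors are pairwise disjoint, and each factor is order-invariant --- hence sign-invariant --- in every section. Merging and sorting the sections of all factors gives the stack $\theta_1<\cdots<\theta_m$ with $S$-invariant combinatorics. A section is the graph of a continuous root function over the connected set $S$, hence connected, so each factor has a single sign on it; a sector contains no zero of any factor and is connected, so each factor has a single sign on it by continuity; in both cases the sign is uniform over $\alpha\in S$. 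As $\mathrm{sign}(f_i(\alpha,b))$ is the product of the signs of the irreducible factors of $f_i$, it depends only on the region index, and the transfer described above finishes the argument.

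The step I expect to be the main obstacle is the nullification phenomenon of Remark \ref{rm:sc}: an irreducible factor may vanish identically on some $C_i$, and then Theorem \ref{th:sc} no longer guarantees order-invariance on $C_{i+1}$, breaking the induction in (i). I would handle it on two fronts. For the conclusion a nullified factor is harmless: if $f_i\equiv 0$ on $S\times\R$ then $f_i(\alpha,b)=f_i(a,b)=0$ for all $\alpha\in S$ and all $b$, so its literal keeps a constant truth value and the sign bookkeeping of (ii) is unaffected. To keep the construction of $S$ itself valid, I would either restrict to the full-dimensional case, where nullification does not occur, or invoke the remedy of Remark \ref{rm:sc} (adjoining suitable partial derivatives of the nullified polynomial) to restore order-invariance of the intermediate projection sets.
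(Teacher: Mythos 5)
Your proposal is correct and follows essentially the same route as the paper, which disposes of the corollary in one line as a ``direct corollary of Theorem~\ref{th:mc} and Theorem~\ref{th:sc}''; your induction on the partial cells $C_1,\ldots,C_{n-1}$ via Theorem~\ref{th:sc} followed by the application of Theorem~\ref{th:mc} over $S=C_{n-1}$ is exactly the intended expansion of that remark. The nullification caveat you flag is real, but it is a gap already present in the paper's own unconditional statement (acknowledged only in Remark~\ref{rm:sc}), not a defect introduced by your argument.
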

\begin{proof}
It is a direct corollary of Theorem \ref{th:mc} and Theorem \ref{th:sc}. 
\end{proof}

\begin{example}
Suppose $f=ax^2+bx+c$ and $\alpha=(1,1,1)$ is a sample of $(a,b,c)$. Then
\begin{align*}
  &F_3=\factor(\proj_{mc}(\{f\},x))=\factor(\{b^2-4ac,a\})=\{b^2-4ac,a\},\\
  &F_2=\factor(\proj_{sc}(\{b^2-4ac,a\},c))=\factor(\{1,a,-4a\})=\{a\},\\
  &F_1=\factor(\proj_{sc}(\{a\},b))=\{a\}.
\end{align*}
So
$$\samplecell(\{f\},a)=c>\Root(b^2-4au,1)\land a>\Root(u,1),$$
and after simplification
$$\samplecell(\{f\},a)=c>\frac{b^2}{4a}\land a>0.$$
\end{example}

\section{CDCL-style search framework}\label{sec:cdcl}
In this section, we introduce a search framework combined with the new projection operator proposed in the previous section. The main notation and concepts about the search framework are taken from Section 3 of \cite{DBLP:conf/cade/JovanovicM12} and Section 26.4.4 of \cite{DBLP:series/faia/2009-185}.

Let $\bar{x}=(x_1,\ldots,x_n)$ and $\level(x_i)=i$. For a polynomial $f$, a literal $l$ and a clause $c$, we define  $\level(f)=\max(\{\level(a)|a\in \var(f))\}$,  $\level(l)=\level(\poly(l))$ and $\level(c)=\max(\{\level(l)|l \in c\})$. We describe the search framework by transition relations between search states as in \cite{DBLP:conf/cade/JovanovicM12}.

The {\em search states} are indexed pairs of the form $M \| \zeta$, where $\zeta$ is  a finite set of polynomial clauses and $M$  is a sequence of  literals and variable assignments. Every literal is  marked as a decision or a propagation literal.  We denote a  {\em propagation literal}  $l$ by  $c\rightarrow l$ if $l$ is propagated from $c$ and denote  a  {\em decision literal}  $l$ by $l^\bullet$. We denote by $x_i\mapsto a_i$ a  variable assignment. Let $\level(x_i\mapsto a_i)=\level(x_i)$ and 
$v[M]=\{x_i\mapsto a_i|(x_i\mapsto a_i)\in M\}$.
For a set $L$ of literals, $v[M](L)$ means the resulting set of $L$ after applying the assignments of $v[M]$.

Next, we introduce transition relations between search states. Transition relations are specified by a set of transition rules. In the following, we use simple juxtaposition to denote the concatenation of sequences ({\it e.g.}, $M,M'$). We treat a literal or a variable assignment as one-element sequence and denote the empty sequence as $\emptyset$. We say the sequence $M$ is ordered when the sequence  is of the form $$M=[N_1,x_1\mapsto a_1,\ldots,N_{k-1},x_{k-1}\mapsto a_{k-1},N_k] $$
where $N_j$ is a sequence of literals and each literal $l\in N_j$ satisfies $\level(l)=j$. Notice that $N_j$ might be $\emptyset$. We define $\level(M)=k$ even if $N_k=\emptyset$. We use $\sample(M)$ to denote the sample $(a_1,\ldots,a_{k-1})$ of $(x_1,\ldots,x_{k-1})$ in $M$ and $\feasible(M)$ to denote the feasible set of $v[M](N_k)$.  For a new literal $l$ with $x_k\in \var(l)$, we say $l$ is consistent with $M$ if $\feasible([M,l])\neq\emptyset$. 
If $l$ is not consistent with $M$, we define $\conflictcore(l,M)$ to be a minimal set of literals $L$ in $M$ such that $v[M](L\cup\{l\})$ does not have a solution for $x_k$.
\begin{remark}
Since there is  only one unassigned variable $x_k$ in  the polynomials in $N_k$, so $\feasible(M)$ can be easily calculated by real-root isolation.
\end{remark}
\begin{definition}
Suppose $l$ is a literal and $M$ is an ordered sequence which satisfies $\level(M)=\level(l)$ and $\lnot l$ is not consistent with $M$. Define the {\em explain clause} of $l$ with $M$ as  
$$\explain(l,M)=\lnot(\samplecell(F,\sample(M))\land \conflictcore(\lnot l,M))\lor l,$$
where $F=\{\poly(l')|l'\in \conflictcore(\lnot l,M)\}\cup \{\poly(l)\}$.
\end{definition}
Meanwhile, we define the {\em state value} of a literal $l$ as

\[
    \Value(l,M)=\left\{\begin{array}{lcl}
        v[M](l)     &    &\level(l)<k, \\
        \text{{\tt True}}  &   &l\in M, \\
        \text{{\tt False}}  &  &\lnot l\in M, \\
        \text{{\tt undef}}   & &\text{otherwise}.
    \end{array}
    \right.
\]
And for a clause $c$,
\[
    \Value(c,M)=\left\{\begin{array}{lcl}
        \text{{\tt True}} &    &\exists l\in c (\Value(l,M)=\text{{\tt True}}), \\
        \text{{\tt False}} &   &\forall l\in c (\Value(l,M)=\text{{\tt False}}), \\
        \text{{\tt undef}}  &  &\text{otherwise}.
    \end{array}
    \right.
\]
Specially, $\Value(\emptyset,M)=\text{{\tt False}}$.

\begin{definition} A set of rules for transition relations between search states are defined as follows where $c$ is a clause and $l$ is a literal.
\begin{description}
    \item[Decide-Literal]
  		$$M\|\zeta,c\Longrightarrow M,l^\bullet\|\zeta,c$$
        if $l,l'\in c$, $\Value(l,M)=\Value(l',M)=\text{{\tt undef}}$, $\level(c)=\level(M)$  and $l$ is consistent with $M$.
   \item[Boolean-Propagation]
   	$$M\|\zeta, c\lor l\Longrightarrow M,c\lor l\rightarrow l\|\zeta,c\lor l$$
	if $\Value(c,M)=\text{{\tt False}},\Value(l,M)=\text{{\tt undef}}$, $\level(c\lor l)=\level(M)$  and  $l$ is consistent with $M$.
  \item[Lemma-Propagation]
    	$$M\|\zeta\Longrightarrow M,\explain(l,M)\rightarrow l\|\zeta$$
	if $l\in \zeta$ or $\lnot l \in \zeta$, $\Value(l,M)=\text{{\tt undef}}$, $\level(l)=\level(M)$ and  $\lnot l$ is not consistent with $M$.
\item[Up-Level]
    $$ M\|\zeta\Longrightarrow M,x\mapsto a\|\zeta$$
     if $\forall c\in \zeta\;(\level(c)\neq\level(M)\lor \Value(c,M)=\text{{\tt True}})$, $\level(x)=\level(M)$ and $a\in \feasible(M)$.
         \item[Sat]
    $$ M\|\zeta\Longrightarrow(\text{sat},v[M])$$
    if $\level(M)>n$.
       	\item[Conflict]
    $$ M\|\zeta\Longrightarrow M\|\zeta\not\vdash c$$
     if $\level(c)=\level(M)$ and $\Value(c,M)=\text{{\tt False}}$.
     \item[backtrack-Propagation]
    $$M,E\rightarrow l\|\zeta \not\vdash c\Longrightarrow M\|\zeta\not\vdash R$$
    if $\lnot l\in c,\Value(c,[M,E])=\text{{\tt False}}$ and $R=\resolve(c,E,l)$\footnote{$\resolve(c_1\lor l,c_2\lor \lnot l,l)=c_1\lor c_2.$}. 
    \item[backtrack-Decision]
    	$$M,l^\bullet\| \zeta\not\vdash c \Longrightarrow M\|\zeta,c$$
    if $\lnot l\in c$.
    \item[Skip]
    \begin{align*}
    M,l^\bullet\| \zeta\not\vdash c &\Longrightarrow M\|\zeta\not\vdash c\\
    M,E\rightarrow l\| \zeta\not\vdash c &\Longrightarrow M\|\zeta\not\vdash c
    \end{align*}
    if $\lnot l\not\in c$ .
    
  \item[Down-Level]
  \[
  \begin{array}{ll}
  	M,x\mapsto a\|\zeta\not\vdash c \Longrightarrow M\|\zeta \not \vdash c,&\text{ if }\Value(c,M)=\text{{\tt False}}, \\
    M,x\mapsto a\|\zeta\not\vdash c \Longrightarrow M\|\zeta,c,&\text{ if }\Value(c,M)=\text{{\tt undef}}.
  \end{array}
   \]
   \item[Unsat]
   $$M\|\zeta \not \vdash c \Longrightarrow \text{unsat}$$
   if $\Value(c,M)=\text{{\tt False}}$ and no  assignment or  decide literal in $M$.
   \item[Forget]
     $$ M\|\zeta,c\Longrightarrow M\|\zeta$$
     if $c$ is a learnt clause.
\end{description}
\end{definition}
\begin{remark}
Note that in this framework we rely on the rule {\em lemma-propagation} to guide the search away from conflicting states. 
When applying lemma-propagation, the most important thing is the explain clause. We cannot simply use the conflicting core as the explain clause, as this will cause explain to be an incorrect lemma because it ignores assignments. Using full CAD to calculate explain is also costly. Thanks to the  sample cell calculated by the novel sample-cell projection operator, we can now efficiently calculate an effective explain to achieve our purpose.
\end{remark}
\begin{theorem}
	Given a polynomial formula $\zeta$ with finitely many clauses, any transition  starting from the initial state $\emptyset\|\zeta$ will terminate either in a state $(sat,v)$, where the assignment $v$ satisfies the formula $\zeta$, or in the $unsat$ state. In the later case, $\zeta$ is unsatisfiable in $\R$.
\end{theorem}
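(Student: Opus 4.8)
The statement is a soundness-and-termination result for the abstract transition system, so the plan is to follow the pattern established for DPLL(T)/NLSAT in \cite{DBLP:conf/cade/JovanovicM12,DBLP:series/faia/2009-185}, adapting the theory-specific steps to our sample-cell machinery. I would organize the argument around three claims: (i) a collection of invariants preserved by every rule, (ii) partial correctness of the two terminal states, and (iii) termination.

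First I would fix the invariants. The central one is that $\zeta$ is maintained \emph{equisatisfiable over $\R$}: every clause ever added to $\zeta$ (by \textbf{backtrack-Decision}, \textbf{Down-Level}, or as an $\explain$ lemma produced by \textbf{Lemma-Propagation}) is entailed by the original formula. For clauses obtained by $\resolve$ this is the usual resolution-soundness bookkeeping; the substantive case is the explain clause $\explain(l,M)=\lnot(\samplecell(F,\sample(M))\land\conflictcore(\lnot l,M))\lor l$. Here I would invoke the corollary stated just above: since $\lnot l$ is inconsistent with $M$, the conflict core has no solution for $x_k$ at the sample $\sample(M)$, and the corollary (built on Theorem \ref{th:mc} and Theorem \ref{th:sc}) guarantees that the same sign conditions remain unsatisfiable for $x_k$ throughout the whole cell $\samplecell(F,\sample(M))$; hence the explain clause is valid in $\R$. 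I would simultaneously carry the structural invariants that $M$ stays ordered, that $\sample(M)$ and $\feasible(M)$ are well defined, and that every literal in $M$ is consistent with its prefix, all of which follow by inspecting the side conditions of each rule.

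With the invariants in place, partial correctness is short. For the \emph{sat} outcome, \textbf{Sat} fires only when $\level(M)>n$; the invariant attached to \textbf{Up-Level} (every clause of the current level already evaluates to \texttt{True} before the level is raised) gives, by induction on levels, that $v[M]$ satisfies every clause of $\zeta$, so $v=v[M]$ is a genuine solution. For the \emph{unsat} outcome, \textbf{Unsat} fires on a state $M\|\zeta\not\vdash c$ with $\Value(c,M)=\text{\tt False}$ and no assignment or decision literal remaining in $M$; since $c$ is entailed by $\zeta$ and is falsified by a trail containing only propagations forced from the root, $\zeta$ itself has no real model.

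The step I expect to be the main obstacle is termination, where the danger is that learning and backtracking cycle forever. I would control this with the \emph{finite-basis} observation: all polynomials that can ever appear are drawn from the finite set obtained by closing the polynomials of $\zeta$ under $\factor$, $\proj_{mc}$ and $\proj_{sc}$ (the remark after Definition \ref{def:projsc} bounds this set), so only finitely many distinct literals, $\Root$-expressions, sample cells, and hence clauses are reachable. Given this, I would assign each state a measure in a well-founded order, lexicographically combining the number of distinct reachable clauses not yet in $\zeta$, the trail length, and the usual decision-level/backtrack rank used for Abstract DPLL, and then verify rule by rule that each transition strictly decreases it, with the conflict-analysis rules (\textbf{backtrack-Propagation}, \textbf{Skip}, \textbf{backtrack-Decision}, \textbf{Down-Level}) handled exactly as in the NLSAT termination proof. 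Establishing that a learned clause is never regenerated, so that \textbf{Forget} cannot reintroduce an endless supply, is the delicate part, and it is precisely where the finiteness of the sample-cell/projection closure is essential.
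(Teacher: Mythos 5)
Your proposal is correct and takes essentially the same approach as the paper: both arguments reduce the theorem to Theorem 1 of \cite{DBLP:conf/cade/JovanovicM12}, whose hypotheses are exactly the two facts you isolate --- soundness of the $\explain$ clauses (via the corollary built on Theorems \ref{th:mc} and \ref{th:sc}) and the existence of a finite basis containing every literal that $\samplecell$ can ever produce. The paper simply cites that theorem and verifies the finite-basis condition in two sentences, whereas you unfold its proof into invariants, partial correctness, and a termination measure; the substantive content is the same.
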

\begin{proof}
By Theorem 1 in \cite{DBLP:conf/cade/JovanovicM12}, if there is a finite set such that all the literals returned every time by calling $\explain$ are always contained in the set, then the above theorem holds. On the other hand, it is not hard to see that all literals that may be generated by $\samplecell$ are determined by finitely many polynomials and their real roots and thus finite. That completes the proof. 
\hfill $\blacksquare$ 
\end{proof}
\section{Experiments}\label{sec:exp}
In order to better demonstrate the effectiveness of our algorithm, we have implemented a prototype solver LiMbS\footnote{https://github.com/lihaokun/LiMbS}  which is base on Mathematica 12. The solver is  a clean translation of the algorithm  in this paper. Our solver is compared to the following solvers that have been popular in SMT nonlinear competition: Z3 (4.8.7-1), CVC4 (1.6-2), Yices (2.6.1) and MathSAT5 (5.6.0).

All tests were conducted on 6-Core Intel Core i7-8750H@2.20GHz with 32GB of memory and ARCH LINUX SYSTEM (5.5.4-arch1-1). The timeout is set to be 5 hours. 
 
The examples listed below, which we collect from several related papers, are either special or cannot be well-solved by existing SMT solvers. All  results are listed in Table \ref{tb:exp}.

\begin{example}({\bf Han}\_$n$)\cite{Dai_Han_Hong_Xia_2015} Decide whether 
$$\exists x_1,\ldots,\exists x_n \;(\sum_{i=1}^nx_i^2)^2-4(\sum_{i=1}^nx_i^2x_{i+1}^2)<0$$
where $x_{n+1} = x_{1}$. 
\end{example}
\begin{example}$({\mathbf P})$\\ 
$\exists a,\exists b,\exists c,\exists d,\exists e,\exists f (a^2 b^2 e^2+a^2 b^2 f^2+a^2 b^2-a^2 b c d e-a^2 b d f+a^2 c^4 f^4+2 a^2 c^4 f^2+a^2 c^4-3 a^2 c^3 e f^3-3 a^2 c^3 e f+3 a^2 c^2 e^2 f^2+a^2 c^2 e^2+a^2 c^2 f^2+a^2 c^2-a^2 c e^3 f-a^2 c e f-a b^2 d e-2 a b c^3 f^4-4 a b c^3 f^2-2 a b c^3+4 a b c^2 e f^3+4 a b c^2 e f+a b c d^2-2 a b c e^2 f^2+a b c f^2+a b c-a b e f+2 a c^3 d f^3+2 a c^3 d f-4 a c^2 d e f^2-2 a c^2 d e+2 a c d e^2 f+b^2 c^2 f^4+2 b^2 c^2 f^2+b^2 c^2-b^2 c e f^3-b^2 c e f-2 b c^2 d f^3-2 b c^2 d f+2 b c d e f^2+c^2 d^2 f^2+c^2 d^2+c^2 f^2+c^2-c d^2 e f-c e f<0)$
\end{example}
\begin{example}\cite{Hong91comparisonof}\quad
\begin{description}
\item[Hong\_$n$]
$$\exists x_1,\ldots,\exists x_n\;\sum_{i=1}^nx_i^2<1\land\prod_{i=1}^n x_i>1$$
\item[Hong2\_$n$]

$$\exists x_1,\ldots,\exists x_n\;\sum_{i=1}^nx_i^2<2n\land\prod_{i=1}^n x_i>1$$
\end{description}
\end{example}

\begin{example}({\bf C}\_$n$\_$r$)
Whether the distance between the 
ball $B_r(\bar{x})$ and the complement of $B_8(\bar{x})$ is less than $\frac{1}{1000}$?
$$\exists_{i=1}^{n} x_i,\exists_{i=1}^{n} y_i \;\sum_{i=1}^nx_i^2<r\land\sum_{i=1}^ny_i^2>8^2\land \sum_{i=1}^n(x_i-y_i)^2<\frac{1}{1000^2}$$
\end{example}

\begin{table}[htbp]

    \centering
      \begin{tabular}{|c|c|c|c|c|c|c|}
      \hline
        &ans& LiMbS& Z3&CVC4&MathSAT5&Yices\\
      \hline
      Han\_3&SAT& 0.01s&0.01s&0.01s&0.01s&0.01s\\
      \hline
      Han\_4&UNSAT &0.08s&0.01s&$>5$h&$>5$h&0.01s\\
      \hline
      Han\_5&UNSAT& 1.26s&$>5$h&$>5$h&$>5$h&$>5$h\\
      \hline
      Han\_6&UNSAT &60s&$>5$h&$>5$h&$>5$h&$>5$h\\
      \hline
      P   &SAT&1.06s&0.05s&$>5$h&$>5$h&$>5$h\\
      \hline
      Hong\_10 &UNSAT&222s&2058s&0.01s&0.10s&$>5$h\\
      \hline
      Hong\_11 &UNSAT&806s&6357s&0.01s&0.10s&$>5$h\\
      \hline
      Hong2\_11 &SAT&30.43s&1997s&0.01s&$>5$h&0.01s\\
      \hline
      Hong2\_12 &SAT&563s&6693s&0.01s&$>5$h&0.01s\\
      \hline
      C\_3\_1& UNSAT & 0.44s&$>5$h&0.62s&5811s&$>5$h\\
      \hline
      C\_3\_32& UNSAT & 0.48s&$>5$h&unknown&$>5$h&$>5$h\\
      \hline
      C\_3\_63& UNSAT &0.48s&$>5$h&unknown&$>5$h&$>5$h\\
      \hline
      C\_3\_64& SAT & 0.02s&4682s&unknown&$>5$h&$>5$h\\
      \hline
      C\_4\_1& UNSAT &1.31s &$>5$h&2.28s&$>5$h&$>5$h\\
      \hline
      C\_4\_32& UNSAT &1.42s &$>5$h&unknown&$>5$h&$>5$h\\
      \hline
      C\_4\_63& UNSAT &1.42s &$>5$h&unknown&$>5$h&$>5$h\\
      \hline
      C\_4\_64& SAT &0.02s &$>5$h&unknown&$>5$h&$>5$h\\
      \hline
      C\_5\_1& UNSAT &5.48s &$>5$h& 19.33s&$>5$h&$>5$h\\
      \hline
      C\_5\_32& UNSAT &5.73s &$>5$h&unknown&$>5$h&$>5$h\\
      \hline
      C\_5\_63& UNSAT &5.68s &$>5$h&unknown&$>5$h&$>5$h\\
      \hline
      C\_5\_64& SAT &0.02s &$>5$h&unknown&$>5$h&1.75s\\
      \hline
      \end{tabular}
      
  \caption{Comparison with other solvers on 21 examples\label{tb:exp}}
  
  \end{table}

Our solver LiMbs solves all the $21$ examples shown in Table \ref{tb:exp}. LiMbs is faster than the other solvers on 15 examples. Only LiMbs can solve 9 of the examples within a reasonable time while other solvers either run time out or return unknown state. 
From this we can see that our algorithm has great potential in solving satisfiability of polynomial formulas, especially considering that our prototype solver is a small program with less than 1000 lines of codes.
For Hong$\_n$ and Hong2$\_n$, though our solver is much faster than Z3, CVC4 is the one that performs best. We note that the examples of Hong$\_n$ and Hong2$\_n$ are all symmetric. This reminds us it is  worth exploiting symmetry to optimize our solver's performance.

\section{Conclusions}\label{sec:conclusion}
A new algorithm for deciding the satisfiability of polynomial formulas over the reals is proposed.  The key point is that we design a new projection operator, the sample-cell projection operator, which can efficiently guide CDCL-style search away from conflicting states. Preliminary evaluation of the prototype solver LiMbS shows the effectiveness of the new algorithm.

We will further develop our algorithm, looking into problems with symmetry, equations or other special structures. 
We also hope to develop an easy-to-use, robust and concise open-source algorithm framework based on our prototype solver to achieve a wider range of applications.

\section*{Acknowledgement} This work was supported partly by NSFC under grants 61732001 and 61532019.
\bibliographystyle{splncs04}
\bibliography{samplecad}
\end{document}